\newtheorem{myclaim}{Claim}
\newenvironment{proofofclaim}[1][Proof]{\noindent\textbf{#1.} }{\ \rule{0.5em}{0.5em}}
\newcommand{\vd}{\textrm{vd}}
\newcommand{\ed}{\textrm{ed}}
\newcommand{\ea}{\textrm{ea}}
\DeclareMathOperator{\operatorClassNP}{NP}
\newcommand{\classNP}{\ensuremath{\operatorClassNP}}
\DeclareMathOperator{\operatorClassCoNP}{coNP}
\newcommand{\classCoNP}{\ensuremath{\operatorClassCoNP}}
\DeclareMathOperator{\operatorClassFPT}{FPT}
\newcommand{\classFPT}{\ensuremath{\operatorClassFPT}}
\DeclareMathOperator{\operatorClassW}{W}
\newcommand{\classW}[1]{\ensuremath{\operatorClassW[#1]}}
\begin{document}

\title{Graph Editing to a Given Degree Sequence\thanks{The research leading to these results has received funding from the European Research Council under the European Union's Seventh Framework Programme (FP/2007-2013)/ERC Grant Agreement n.~267959 and by the EPSRC Grant EP/K022660/1.}}

\author{Petr A. Golovach\inst{1}
\and
George B. Mertzios\inst{2}
}

\institute{Department of Informatics, University of Bergen, Norway.
\and
School of Engineering and Computing Sciences, Durham University, UK.
}

\maketitle

\begin{abstract}
 We investigate the parameterized complexity of  the graph editing problem called \textsc{Editing to a Graph with a Given Degree Sequence}, where the aim is to obtain a graph  with a given degree sequence $\sigma$ by at most $k$ vertex or edge deletions and edge additions. We show that the problem is \classW{1}-hard when parameterized by $k$ for any combination of the allowed editing operations. From the positive side, we show that the problem can be solved in time $2^{O(k(\Delta+k)^2)}n^2\log n$ for $n$-vertex graphs, where $\Delta=\max\sigma$, i.e., the problem is \classFPT{} when parameterized by $k+\Delta$.  We also show that \textsc{Editing to a Graph with a Given Degree Sequence} has a polynomial kernel when parameterized by $k+\Delta$ if  only edge additions are allowed, and  there is no polynomial kernel unless  $\classNP\subseteq\classCoNP/\text{\rm poly}$ for all other combinations of allowed editing operations.
\end{abstract}

\section{Introduction}
The aim of graph editing (or graph modification) problems is to modify a given graph  
by applying a bounded number of permitted operations in order to satisfy a certain
property. Typically, vertex deletions, edge deletions and edge additions are the considered as the permitted editing operations, but in some cases other operations like edge contractions and vertex additions are also permitted.

We are interested in graph editing problems, where the aim is to obtain a graph satisfying some given degree constraints. These problems usually turn out to be \classNP-hard (with rare exceptions). Hence, we are interested in the parameterized complexity of such problems. Before we state our results we briefly discuss the known related (parameterized) complexity results.

\medskip
\noindent
{\bf Related work.} 
The investigation of the parameterized complexity of such problems was initiated by 
Moser and Thilikos in~\cite{MoserT09} and Mathieson and Szeider~\cite{MathiesonS12}. 
In particular, Mathieson and Szeider~\cite{MathiesonS12} considered the \textsc{Degree Constraint Editing} problem that asks for a given graph $G$, nonnegative integers $d$ and $k$, and a function $\delta\colon V(G)\rightarrow 2^{\{0,\ldots,d\}}$, whether $G$ can be modified into a graph $G'$ such
that $d_{G'}(v)\in \delta(v)$ for each $v\in V(G')$, by using at most $k$ editing operations. They 
classified the (parameterized) complexity of the problem depending on the set of allowed editing operations.
In particular, they proved that if only edge deletions and additions are permitted, then the problem can be solved in polynomial time for the case where the set of feasible degrees $|\delta(v)|=1$ for $v\in V(G)$. 
Without this restriction on the size of the sets of feasible degrees, the problem is \classNP-hard even on subcubic planar graphs whenever only edge deletions are allowed~\cite{Cornuejols88} and whenever only edge additions are allowed~\cite{FroeseNN14}.
If vertex deletions can be used, then the problem becomes 
 \classNP-complete and \classW{1}-hard with parameter~$k$, even if the sets of feasible degrees have size oner~\cite{MathiesonS12}. 
 Mathieson and Szeider~\cite{MathiesonS12} showed that \textsc{Degree Constraint Editing} is  \classFPT\ when parameterized by $d+k$. They also proved that the problem has a polynomial kernel in  the case where only vertex and edge deletions are allowed and the sets of feasible degrees have size one. Further kernelization results were obtained by Froese, Nichterlein and Niedermeier~\cite{FroeseNN14}. In particular, they proved that the problem with the parameter $d$ admits a polynomial kernel if only edge additions are permitted.  They also complemented these results by showing that there is no polynomial kernel 
unless $\classNP\subseteq\classCoNP/\text{\rm poly}$ if only vertex or edge deletions are allowed. Golovach proved in~\cite{Golovach15} that, unless $\classNP\subseteq\classCoNP/\text{\rm poly}$, the problem does not admit a polynomial kernel when parameterized by $d+k$ if vertex deletion and edge addition are in the list of operations, even if the sets of feasible degrees have size one. 
The case where the input graph is planar was considered by Dabrowski et al. in~\cite{DabrowskiGHPT15}.
Golovach~\cite{Golovach14a} introduced a variant of \textsc{Degree Constraint Editing} in which, besides the degree restrictions, it is required that the graph obtained by editing should be connected. This variant for planar input graphs was also considered in~\cite{DabrowskiGHPT15}.

Froese, Nichterlein and Niedermeier~\cite{FroeseNN14} also considered the \textsc{$\Pi$-Degree Sequence Completion} problem which, 
given a graph $G$, a nonnegative integer $k$, and a property $\Pi$ of graph degree sequences, 
asks whether it is possible to obtain a graph $G\rq{}$ from $G$ by adding at most $k$ edges such that the degree sequence of $G\rq{}$ satisfies $\Pi$. 
They gave some conditions when the problem is \classFPT{}/admits a polynomial kernel when parameterized by $k$ and the maximum degree of $G$.
There are numerous results (see, e.g.,~\cite{BoeschST77,CaiY11,CyganMPPS14,DGHP14}) about the graph editing problem, where the aim is to obtain a (connected) graph whose vertices satisfy some parity restrictions on their degree. In particular, if the obtained graph is required to be a connected graph with vertices of even degree, we obtain the classical \textsc{Editing to Eulerian Graph} problem (see.~\cite{BoeschST77,DGHP14}). 

Another variant of graph editing with degree restrictions is the \textsc{Degree Anonymization} problem, motivated by some privacy and social networks applications. A graph $G$ is \emph{$h$-anonymous} for a positive integer $h$ if for any $v\in V(G)$, there is at least $h-1$ other vertices of the same degree. \textsc{Degree Anonymization} asks, given a graph $G$, a nonnegative $h$, and a positive integer $k$, whether it is possible to obtain an $h$-anonymous graph by at most $k$ editing operations. The investigation of the parameterized complexity of \textsc{Degree Anonymization} was initiated by Hartung et al.~\cite{HartungNNS15} and Bredereck et al.~\cite{BredereckHNW13} (see also~\cite{BredereckFHNNT14,HartungT15}). In particular,  Hartung et al.~\cite{HartungNNS15} considered the case where only edge additions are allowed. They proved that the problem is \classW{1}-hard when parameterized by $k$, but it becomes \classFPT{} and has a polynomial kernel when parameterized by the maximum degree $\Delta$ of an input graph. Bredereck et al. in~\cite{BredereckHNW13} considered vertex deletions. They proved that the problem is \classW{1}-hard when parameterized by $h+k$, but it is \classFPT{} when parameterized by $\Delta+h$ or by $\Delta+k$. Also the problem was investigated for the cases when vertex additions~\cite{BredereckFHNNT14} and edge contractions~\cite{HartungT15} are the editing operations.

\medskip
\noindent
{\bf Our results.}
Recall that the \emph{degree sequence} of a graph is the nonincreasing sequence of its vertex degrees. 
We consider the graph editing problem, where the aim is to obtain a graph with a given \emph{degree sequence} by using the operations \emph{vertex deletion}, \emph{edge deletion}, and \emph{edge addition}, 
denoted by $\vd$, $\ed$, and $\ea$, respectively. Formally, the problem is stated as follows. Let $S\subseteq\{\vd,\ed,\ea\}$.

\begin{center}
\begin{boxedminipage}{.99\textwidth}
\textsc{Editing to a Graph with a Given Degree Sequence}\\
\begin{tabular}{ r l }
\textit{~~Instance:} & A graph $G$, a nondecreasing sequence of nonnegative integers\\
& $\sigma$ and a nonnegative integer $k$.\\
\textit{Question:} & Is it possible to obtain a graph $G'$ with the degree sequence $\sigma$\\
& from $G$  by at most $k$ operations from $S$?\\
\end{tabular}
\end{boxedminipage}
\end{center}

It is worth highlighting here the difference between this problem and the \textsc{Editing to a Graph of Given Degrees} problem studied in~\cite{Golovach15}. 
In~\cite{Golovach15} a function $\delta:V(G)\rightarrow \{1,\ldots,d\}$ is given along with the input and, in the target graph~$G'$, every vertex~$v$ is required to have the \emph{specific} degree~$\delta(v)$. 
In contrast, in the \textsc{Editing to a Graph with a Given Degree Sequence}, only a degree sequence is given with the input and the requirement is that the target graph~$G'$ has this degree sequence, 
without specifying which specific vertex has which specific degree. 
To some extend, this problem can be seen as a generalization of the \textsc{Degree Anonymization} problem~\cite{HartungNNS15,BredereckHNW13,BredereckFHNNT14,HartungT15}, 
as one can specify (as a special case) the target degree sequence in such a way that every degree appears at least $h$ times in it.

In practical applications with respect to privacy and social networks, we might want to appropriately ``smoothen'' the degree sequence of a given graph in such a way that 
it becomes difficult to distinguish between two vertices with (initially) similar degrees. 
In such a setting, it does not seem very natural to specify in advance a \emph{specific} desired degree to every \emph{specific} vertex of the target graph. 
Furthermore, for anonymization purposes in the case of a social network, where the degree distribution often follows a so-called power law distribution~\cite{Barabasi99}, 
it seems more natural to identify a smaller number of vertices having all the same ``high'' degree, and a greater number of vertices having all the same ``small'' degree, 
in contrast to the more modest $h$-anonymization requirement where \emph{every} different degree must be shared among at least $h$ identified vertices in the target graph.

In Section~\ref{sec:defs}, we observe that  for any nonempty $S\subseteq\{\vd,\ed,\ea\}$, 
\textsc{Editing to a Graph with a Given Degree Sequence} is \classNP-complete and  \classW{1}-hard when parameterized by $k$. Therefore, we consider a stronger parameterization by $k+\Delta$, where $\Delta=\max\sigma$. In Section~\ref{sec:FPT}, we show that \textsc{Editing to a Graph with a Given Degree Sequence} is \classFPT\ when parameterized by $k+\Delta$. In fact, we obtain this result for the more general variant of the problem, where we ask whether we can obtain a graph $G\rq{}$ with the degree sequence $\sigma$ from an input graph $G$ by 
at most $k_{vd}$ vertex deletions, $k_{ed}$ edge deletions and $k_{ea}$ edge  additions. We show that the problem can be solved in time $2^{O(k(\Delta+k)^2)}n^2\log n$ for $n$-vertex graphs, where $k=k_{vd}+k_{ed}+k_{ea}$. The algorithm uses  the random separation techniques introduced by Cai, Chan and Chan~\cite{CaiCC06} (see also~\cite{AlonYZ95}). First, we construct a true biased Monte Carlo algorithm and then explain how it can be derandomized. 
In Section~\ref{sec:kernel}, we show that \textsc{Editing to a Graph with a Given Degree Sequence} has a polynomial kernel when parameterized by $k+\Delta$ if $S=\{ea\}$, but for all other nonempty $S\subseteq \{vd,ed,ea\}$, there is no polynomial kernel unless  $\classNP\subseteq\classCoNP/\text{\rm poly}$.

\section{Basic definitions and preliminaries}\label{sec:defs}

\noindent
{\bf Graphs.}
We consider only finite undirected graphs without loops or multiple
edges. The vertex set of a graph $G$ is denoted by $V(G)$ and  
the edge set  is denoted by $E(G)$.

For a set of vertices $U\subseteq V(G)$,
$G[U]$ denotes the subgraph of $G$ induced by $U$, and by $G-U$ we denote the graph obtained from $G$ by the removal of all the vertices of $U$, i.e., the subgraph of $G$ induced by $V(G)\setminus U$. If $U=\{u\}$, we write $G-u$ instead of $G-\{u\}$. 
Respectively, for a set of edges $L\subseteq E(G)$, $G[L]$ is a subgraph of $G$ induced by $L$, i.e, the vertex set of $G[L]$ is the set of vert ices of $G$ incident to the edges of $L$, and $L$ is the set of edges of $G[L]$.  
For a nonempty set $U$, $\binom{U}{2}$ is the set of unordered pairs of elements of $U$.
For a set of edges $L$,
by $G-L$ we denote the graph obtained from $G$ by the removal of all the edges of $L$.
Respectively, for $L\subseteq \binom{V(G)}{2}$, $G+L$ is the graph obtained from $G$ by the addition of the edges that are elements of $L$.
If $L=\{a\}$, then for simplicity, we write $G-a$ or $G+a$.

For a vertex $v$, we denote by $N_G(v)$ its
\emph{(open) neighborhood}, that is, the set of vertices which are adjacent to $v$, and for a set $U\subseteq V(G)$, $N_G(U)=(\cup_{v\in U}N_G(v))\setminus U$.
The \emph{closed neighborhood} $N_G[v]=N_G(v)\cup \{v\}$, and for a positive integer $r$, $N_G^r[v]$ is the set of vertices at distance at most $r$ from $v$. 
For a set $U\subseteq V(G)$ and a positive integer $r$, $N_G^r[U]=\cup_{v\in U}N_G^r[v]$.
The \emph{degree} of a vertex $v$ is denoted by $d_G(v)=|N_G(v)|$. The \emph{maximum degree} $\Delta(G)=\max\{d_G(v)\mid v\in V(G)\}$.

For a graph $G$, we denote by $\sigma(G)$ its degree sequence. Notice that $\sigma(G)$ can be represented 
by the vector $\delta(G)=(\delta_0,\ldots,\delta_{\Delta(G)})$, where 
$\delta_i=|\{v\in V(G)\mid d_G(v)=i\}|$ for $i\in\{0,\ldots,\Delta(G)\}$.
We call $\delta(G)$ the \emph{degree vector} of $G$. 
For a sequence $\sigma=(\sigma_1,\ldots,\sigma_n)$, $\delta(\sigma)=(\delta_0,\ldots,\delta_r)$,
where $r=\max\sigma$ and $\delta_i=|\{\sigma_j\mid \sigma_j=i\}|$ for $i\in\{0,\ldots,r\}$.
Clearly, $\delta(G)=\delta(\sigma(G))$, and
the degree vector can be easily constructed from the degree sequence and vice versa.
Slightly abusing notation, we write for two vectors of nonnegative integers, that $(\delta_0,\ldots,\delta_r)=(\delta_0',\ldots,\delta_{r'}')$ for $r\leq r'$ if $\delta_i=\delta_i'$ for $i\in\{0,\ldots,r\}$ and
$\delta_i'=0$ for $i\in\{r+1,\ldots,r'\}$.

\medskip
\noindent
{\bf Parameterized Complexity.}
Parameterized complexity is a two dimensional framework
for studying the computational complexity of a problem. One dimension is the input size
$n$ and another one is a parameter $k$. It is said that a problem is \emph{fixed parameter tractable} (or \classFPT), if it can be solved in time $f(k)\cdot n^{O(1)}$ for some function $f$.
A \emph{kernelization} for a parameterized problem is a polynomial algorithm that maps each instance $(x,k)$ with the input $x$ and the parameter $k$ to an instance $(x',k')$ such that i) $(x,k)$ is a YES-instance if and only if $(x',k')$ is a YES-instance of the problem, and ii) $|x'|+k'$ is bounded by $f(k)$ for a computable function $f$. 
The output $(x',k')$ is called a \emph{kernel}. The function $f$ is said to be a \emph{size} of a kernel. Respectively, a kernel is \emph{polynomial} if $f$ is polynomial. 
A parameterized problem is \classFPT{} if and only if it has a kernel, but it is widely believed that not all \classFPT{} problems have polynomial kernels. In particular, Bodlaender et al.~\cite{BodlaenderJK14} introduced techniques that allow to show that a parameterized problem has no polynomial kernel unless  $\classNP\subseteq\classCoNP/\text{\rm poly}$.
We refer to the recent books of Cygan et al.~\cite{CyganFKLMPPS15} and  Downey and Fellows~\cite{DowneyF13} for  detailed introductions  to parameterized complexity. 

\medskip
\noindent
{\bf Solutions of \textsc{Editing to a Graph with a Given Degree Sequence}.}
Let $(G,\sigma,k)$ be an instance of \textsc{Editing to a Graph of Given Degrees}.
Let $U\subset V(G)$, $D\subseteq E(G-U)$ and $A\subseteq \binom{V(G)\setminus U}{2}$.
 We say that $(U,D,A)$ is a \emph{solution} for $(G,\delta,d,k)$, if $|U|+|D|+|A|\leq k$, and the graph $G'=G-U-D+A$ has the degree sequence $\sigma$.  
We also say that $G'$ is obtained by editing with respect to $(U,D,A)$. 
 If $\vd$, $\ed$ or $\ea$ is not in $S$, then it is assumed that $U=\emptyset$, $D=\emptyset$ or $A=\emptyset$ respectively. If $S=\{\ed\}$, then instead of $(\emptyset,\emptyset,A)$ we simply write $A$.

We conclude this section by showing that \textsc{Editing to a Graph with a Given Degree Sequence} is hard when parameterized by $k$.

\begin{theorem}\label{thm:W-hard}
For any nonempty $S\subseteq\{\vd,\ed,\ea\}$, 
\textsc{Editing to a Graph with a Given Degree Sequence} is \classNP-complete and  \classW{1}-hard when parameterized by $k$.
\end{theorem}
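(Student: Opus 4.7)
Membership in \classNP\ is routine for every choice of $S$: a candidate triple $(U,D,A)$ can be written on a polynomial-size certificate, and we can build $G-U-D+A$ and compare its degree sequence to $\sigma$ in polynomial time. For the two hardness claims, the plan is to reduce from \textsc{Clique} on $r$-regular graphs, which remains \classW{1}-hard parameterized by the clique size $k$ (one can pad any instance of \textsc{Clique} to make the host graph regular of a prescribed degree). The guiding idea is to design the target degree sequence $\sigma$ so that both its length and its global degree budget are so tight that any feasible edit sequence is forced to mimic the act of selecting, deleting, or uncovering a $k$-clique in $H$.

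For $S=\{\ea\}$, I would take $G=\overline{H}$ and set $\sigma$ to have $k$ entries equal to $(n-1-r)+(k-1)$ and $n-k$ entries equal to $n-1-r$, with budget $k'=\binom{k}{2}$. The global degree sum immediately forces any solution to consist of exactly $\binom{k}{2}$ additions, and since only $k$ designated vertices are allowed to change their degree, each new edge must have both endpoints inside a fixed $k$-set, which is therefore an independent set in $G=\overline{H}$ and hence a clique in $H$. The case $S=\{\ed\}$ is handled by the mirror construction: $G=H$ and $\sigma$ lowers $k$ chosen degrees by $k-1$. For $S=\{\vd\}$, I would work with the edge-subdivision $G$ of $H$ (so original vertices have degree $r$ and edge-vertices have degree $2$) and set $\sigma$ equal to the multiset produced by deleting the $k$ vertices of a clique: $\binom{k}{2}$ isolated and $k(r-k+1)$ pendant edge-vertices, together with $n-k$ surviving originals of degree $r$. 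Since $\binom{a}{2}\geq\binom{k}{2}$ forces $a\geq k$, counting the degree-$0$ entries alone will pin down that exactly $k$ originals (and no edge-vertices) are deleted and that they must induce a $K_k$ in $H$.

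To handle all seven nonempty $S\subseteq\{\vd,\ed,\ea\}$ the three atomic reductions above will suffice. In the $\{\ea\}$ construction the equality $|\sigma|=|V(G)|$ already forbids vertex deletions, and comparing the mandatory total-degree change $2(|A|-|D|)=k(k-1)$ with the budget $|A|+|D|\leq\binom{k}{2}$ forces $D=\emptyset$, so the very same instance serves as a valid reduction for every $S\supseteq\{\ea\}$. A symmetric analysis lifts the $\{\ed\}$ reduction to every $S\supseteq\{\ed\}$, and together with the stand-alone $\{\vd\}$ reduction this covers all cases. The main technical hurdle will be the $\{\vd\}$ case, because a single vertex deletion changes several degrees at once; the subdivision trick confines that effect to a single layer of degree-$2$ edge-vertices, after which the extremal counting argument above forces both the number of originals deleted and the clique structure among them.
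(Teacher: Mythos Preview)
Your plan is correct and tracks the paper's proof almost exactly: the paper also reduces from \textsc{Clique}/\textsc{Independent Set} on regular graphs, uses the length of $\sigma$ to kill $\vd$ and a degree-sum count to kill the ``wrong'' edge operation, and handles $S=\{\vd\}$ via the edge subdivision of $H$.

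The one genuine difference is in the $\{\vd\}$ case. The paper keeps all original vertices, sets the budget to $k'=\binom{k}{2}$, and asks for a target sequence with $k$ entries equal to $d-(k-1)$; under the assumption $d-(k-1)\ge 3$ only subdivision vertices can be deleted, and the deleted $\binom{k}{2}$ subdivision vertices are then forced to be exactly the edges of a $k$-clique. Your variant instead deletes the $k$ original clique vertices (budget $k$) and reads the clique off the count of degree-$0$ edge-vertices via $\binom{a}{2}\ge\binom{k}{2}\Rightarrow a\ge k$. Both work; the paper's version needs a one-line backward argument once the ``only degree-$2$ vertices can be deleted'' observation is made, whereas yours needs the additional hypothesis $r>k$ (to rule out an original vertex being stripped to degree $0$) and the extremal counting step, but gives the slightly nicer feature that the editing parameter equals the clique parameter rather than $\binom{k}{2}$. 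You should state the $r>k$ assumption explicitly (it is WLOG for regular \textsc{Clique}) and not forget the degree-$2$ block of unchanged edge-vertices when writing out $\sigma$.
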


\begin{proof}
Suppose that $\ed\in S$.
We reduce the \textsc{Clique} problem that asks for a graph $G$ and a positive integer $k$, whether $G$ has a clique of size $k$. This problem is known to be \classNP-complete~\cite{GareyJ79} and \classW{1}-hard when parameterized by $k$~\cite{Cai08} even if the input graph restricted to be  regular. Let $(G,k)$ be an instance of \textsc{Clique}, where $G$ is an $n$-vertex $d$-regular graph, $d\geq k-1$. Consider the sequence $\sigma=(\sigma_1,\ldots,\sigma_n)$, where
$$
\sigma_i=
\begin{cases}
d&\mbox{if }1\leq i\leq n-k,\\
d-(k-1)&\mbox{if } n-k+1\leq i\leq n.
\end{cases}
$$
Let $k'=k(k-1)/2$. We claim that $(G,k)$ is a yes-instance of \textsc{Clique} if and only if $(G,\sigma,k')$ is a yes-instance of \textsc{Editing to a Graph with a Given Degree Sequence}.
If $K$ is a clique of size $k$ in $G$, then the graph $G'$ obtained  from $G$ by the deletion of the $k'=k(k-1)/2$ edges of $D=E(G[K])$ has the degree sequence $\sigma$. Assume that 
$(U,D,A)$ is a solution of $(G,\sigma,k)$. Clearly, $U=\emptyset$ even if $\vd\in R$, because $\sigma$ contains $n$ elements. Since $\sum_{i=1}^n\sigma_i=dn-k(k-1)$, we have that $A=\emptyset$. It remains to notice that because in $G-D$ $k$ vertices have degree $d-(k-1)$, $G[D]$ is a compete graph with $k$ vertices, i.e., $G$ contains a clique of size $k$.

Suppose that $\ea\in S$. We reduce 
\textsc{Independent Set} problem that asks for a graph $G$ and a positive integer $k$, whether $G$ has an independent set of size $k$. Again, \textsc{Independent Set}  is \classNP-complete~\cite{GareyJ79} and \classW{1}-hard when parameterized by $k$~\cite{Cai08} even if the input graph restricted to be  regular. Let $(G,k)$ be an instance of \textsc{Independent Set}, where $G$ is an $n$-vertex $d$-regular graph and $k\leq n$. Consider the sequence $\sigma=(\sigma_1,\ldots,\sigma_n)$, where
$$
\sigma_i=
\begin{cases}
d+(k-1)&\mbox{if } 1\leq i\leq k,\\
d&\mbox{if } k+1\leq i\leq n.
\end{cases}
$$
Let $k'=k(k-1)/2$. Similarly to the case $\ed\in S$, we obtain that  $(G,k)$ is a yes-instance of \textsc{Independent Set} if and only if $(G,\sigma,k')$ is a yes-instance of \textsc{Editing to a Graph with a Given Degree Sequence}.

Finally, assume that $S=\{\vd\}$. We again reduce  the \textsc{Clique} problem for regular graphs. Let  $(G,k)$ be an instance of \textsc{Clique}, where $G$ is an $n$-vertex $d$-regular graph with $m$ edges.
 We assume without loss of generality that  $d-(k-1)\geq 3$. The graph $G'$ is constructed from $G$ by subdividing each edge of $G$, i.e., for each $xy\in E(G)$, we construct a new vertex $u$ and replace $xy$ by $xu$ and $yu$. Let $k'=k(k-1)/2$.
Consider the sequence $\sigma=(\sigma_1,\ldots,\sigma_p)$, where $p=n+m-k'$ and 
$$
\sigma_i=
\begin{cases}
d&\mbox{if } 1\leq i\leq n-k,\\
d-(k-1)&\mbox{if } n-k+1\leq i\leq n,\\
2&\mbox{if } n+1\leq i\leq p.
\end{cases}
$$
Again similarly to the case $\ed\in S$, we obtain that  $(G,k)$ is a yes-instance of \textsc{Clique} if and only if $(G',\sigma,k')$ is a yes-instance of \textsc{Editing to a Graph with a Given Degree Sequence}.\qed
\end{proof}

\section{\classFPT-algorithm for Editing to a Graph with a Given Degree Sequence}\label{sec:FPT}
In this section we show that \textsc{Editing to a Graph with a Given Degree Sequence} is \classFPT\ when parameterized by $k+\Delta$, where $\Delta=\max\sigma$. In fact, we obtain this result for the more general variant of the problem:

\begin{center}
\begin{boxedminipage}{.99\textwidth}
\textsc{Extended Editing to a Graph with a Given Degree Sequence}\\
\begin{tabular}{ r l }
\textit{~~Instance:} & A graph $G$, a nondecreasing sequence of nonnegative integers\\
& $\sigma$ and a nonnegative integers $k_{vd},k_{ed},k_{ea}$.\\
\textit{Question:} & Is it possible to obtain a graph $G'$ with $\sigma(G)=\sigma$ from $G$ by \\
& at most $k_{vd}$ vertex deletions, $k_{ed}$ edge deletions and $k_{ea}$ edge\\ 
& additions?\\
\end{tabular}
\end{boxedminipage}
\end{center}

\begin{theorem}\label{thm:FPT}
\textsc{Extended Editing to a Graph with a Given Degree Sequence} can be solved it time $2^{O(k(\Delta+k)^2)}n^2\log n$ for $n$-vertex graphs, where $\Delta=\max \sigma$ and $k=k_{vd}+k_{ed}+k_{ea}$.
\end{theorem}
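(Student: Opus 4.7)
My plan is to use the random separation technique of Cai, Chan, and Chan~\cite{CaiCC06} that the paper hints at. I begin with a simple reduction rule: any vertex $v$ with $d_G(v) > \Delta + k$ must belong to $U$ in every valid solution, since otherwise $v$ survives to $G'$ with $d_{G'}(v) \geq d_G(v) - k_{vd} - k_{ed} > \Delta = \max \sigma$, a contradiction. Forcing or rejecting each such high-degree vertex reduces the problem to instances with $\Delta(G) \leq \Delta + k$.

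Next, for a fixed solution $(U,D,A)$ define the affected set $W := N_G[U] \cup V(G[D]) \cup V(G[A])$. Two structural facts drive everything: (i) $|W| \leq k_{vd}(\Delta+k+1) + 2k_{ed} + 2k_{ea} = O(k(\Delta+k))$, and (ii) for every $v \notin W$ we have $d_{G'}(v) = d_G(v)$, so the contribution of $V(G) \setminus W$ to the final degree sequence is fully determined by $G$ alone. Once a candidate $W$ is fixed, checking a solution therefore reduces to enumerating local edits inside $W$ and verifying the degree-sequence condition.

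For the random step I colour each vertex of $G$ independently red or blue, each with probability $1/2$, and call the colouring \emph{good} if $W$ is entirely red while $N_G(W) \setminus W$ is entirely blue. Since $|N_G[W]| \leq |W|(\Delta+k+1) = O(k(\Delta+k)^2)$, a random colouring is good with probability at least $2^{-O(k(\Delta+k)^2)}$. In any good colouring, every connected component of the red subgraph $G[R]$ is either contained in $W$ or disjoint from $W$; in particular $W$ is a disjoint union of at most $2k$ such components (one per unit of $k_{vd}$ or $k_{ed}$, and up to two per unit of $k_{ea}$), each of size at most $|W|$.

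Given a good colouring, the algorithm searches for $W$ among unions of small components of $G[R]$, together with a local choice of $(U,D,A) \subseteq W$. The local enumeration contributes only $|W|^{O(k)} = 2^{O(k\log(k(\Delta+k)))}$ per candidate $W$, which is absorbed by the $2^{O(k(\Delta+k)^2)}$ factor from boosting the random step. The genuinely delicate point, which I expect to be the main technical obstacle, is choosing which small red components participate in $W$ without paying $n^{\Theta(k)}$: because edges of $A$ can cross component boundaries, a naive per-component decomposition is not immediate. I would address this by a dynamic programme over the small red components whose state records the used editing budget, a short list of still-to-be-matched $A$-half-edges, and the partial change already applied to the degree vector of $G$, all quantities bounded by functions of $k$ and $\Delta$. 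Finally, I replace the random colouring by an $(n, O(k(\Delta+k)^2))$-universal set in the spirit of~\cite{AlonYZ95}, turning the algorithm deterministic within the claimed $2^{O(k(\Delta+k)^2)} \cdot n^2 \log n$ bound.
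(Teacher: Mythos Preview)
Your proposal is correct and follows essentially the same route as the paper: kill high-degree vertices, apply random separation to isolate the editing region, run a dynamic programme over the resulting colour components whose state records the used budget, the degrees of pending $A$-endpoints, and the evolving degree vector, and derandomize via universal sets. The only notable differences are cosmetic---the paper colours vertices with three colours and edges with two (pre-labelling $U$, the $A$-endpoints, and $D$), while you use a single two-colouring of vertices and recover the roles by local enumeration inside each small red component; your side remark that $W$ splits into at most $2k$ red components is a slight overstatement (it can be up to $|W|$), but this is harmless since the DP iterates over all small components anyway.
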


\begin{proof}
First, we construct a randomized true biased Monte Carlo \classFPT-algorithm for \textsc{Extended Editing to a Graph with a Given Degree Sequence} parameterized by $k+d$ based on the random separation techniques introduced by Cai, Chan and Chan~\cite{CaiCC06} (see also~\cite{AlonYZ95}). Then we explain how this algorithm can be derandomized. 

Let $(G,S,k_{vd},k_{ed},k_{ea})$ be an instance of \textsc{Extended Editing to a Graph with a Given Degree Sequence}, $n=|V(G)|$. 

On the first stage of the algorithm we preprocess the instance to  get rid of vertices of high degree or solve the problem if we have a trivial no-instance by the following reduction rule.

\medskip
\noindent
{\bf Vertex deletion rule.} 
If $G$ has a vertex $v$ with $d_G(v)> \Delta+k_{vd}+k_{ed}$, then delete $v$ and set $k_{vd}=k_{vd}-1$. If $k_{vd}<0$, then stop and return a NO-answer.

\medskip
To show that the rule is \emph{safe}, i.e., by the application of the rule we either correctly solve the problem or obtain an equivalent instance, assume that $(G,\sigma,k_{vd},k_{ed},k_{ea})$ is a yes-instance of \textsc{Extended Editing to a Graph with a Given Degree Sequence}. Let $(U,D,A)$ be a solution. We show that if $d_G(v)> \delta+k_{vd}+k_{ed}$, then $v\in U$. To obtain a contradiction, assume that $d_G(v)> \delta+k_{vd}+k_{ed}$ but $v\notin U$. Then $d_{G'}(v)\leq \Delta$, where $G'=G-U-D+A$. It remains to observe that to decrease the degree of $v$ by at least $k_{vd}+k_{ed}+1$, we need at least $k_{vd}+k_{ed}+1$ vertex or edge deletion operations; a contradiction. We conclude that if $(G,\sigma,k_{vd},k_{ed},k_{ea})$ is a yes-instance, then the instance obtained by the application of the rule is also a yes-instance. It is straightforward to see that if $(G',\sigma,k_{vd}',k_{ed},k_{ea})$ is a yes-instance of \textsc{Extended Editing to a Graph with a Given Degree Sequence} obtained by the deletion of  a vertex $v$ and $(U,D,A)$ is a solution, then $(U\cup\{v\},D,A)$ is a solution for the original instance. Hence, the rule is safe. 

\medskip
We exhaustively apply the rule until we either stop and return a NO-answer or obtain an instance of the problem such that the degree of any vertex $v$ is at most $\Delta+k$. To simplify notations, we assume that $(G,\sigma,k_{vd},k_{ed},k_{ea})$ is such an instance.

On the next stage of the algorithm we apply the random separation technique.
We color the vertices of $G$ independently and uniformly at random by three colors. In other words, we partition $V(G)$ into three sets $R_v$, $Y_v$ and $B_v$ (some sets could be empty), and say that the vertices of $R_v$ are \emph{red}, the vertices of $Y_v$ are \emph{yellow} 
and the vertices of $B_v$ are \emph{blue}. Then the edges of $G$ are colored by either \emph{red} or \emph{blue}. We denote by $R_e$ the set of red and by $B_e$ the set of blue edges respectively. 

We are looking for a solution  $(U,D,A)$ of $(G,S,k_{vd},k_{ed},k_{ea})$ such that the vertices of $U$ are colored red, the vertices incident to the edges of $A$ are yellow and the edges of $D$ are red. Moreover, if $X$ and $Y$ are the sets of vertices incident to the edges of $D$ and $A$ respectively, then 
the vertices of $(N_G^2[U]\cup N_G[X\cup Y])\setminus (U\cup Y)$ and the edges of $E(G)\setminus D$ incident to the vertices of $N_G[U]\cup X\cup Y$ should be blue.
Formally, 
we say that a solution $(U,D,A)$ of $(G,S,k_{vd},k_{ed},k_{ea})$ is a \emph{colorful} solution if there are $R_v^*\subseteq R_v$, $Y_v^*\subseteq Y_v$ and $R_e^*\subseteq R_e$ such that the following holds.
\begin{itemize}
\item[i)] $|R_v^*|\leq k_{vd}$, $|R_e^*|\leq k_{ed}$ and $|Y_v^*|\leq 2k_{ea}$.
\item[ii)] $U=R_v^*$, $D=R_e^*$,  and for any $uv\in A$, $u,v\in Y_v^*$ and $|A|\leq k_{ea}$.
\item[iii)] If $u,v\in R_v\cup Y_v$  and $uv\in E(G)$, then either $u,v\in R_v^*\cup Y_v^*$ or $u,v\notin R_v^*\cup Y_v^*$.
\item[iv)] If $u\in  R_v\cup Y_v$ and $uv\in R_e$, then either $u\in R_v^*\cup Y_v^*,uv\in R_e^*$ or $u\notin R_v^*\cup Y_v^*,uv\notin R_e^*$.
\item[v)] If $uv,vw\in R_e$, then either $uv,vw\in R_e^*$ or $uv,vw\notin R_e^*$.
\item[vi)] If distinct $u,v\in R_v$ and $N_G(u)\cap N_G(v)\neq\emptyset$, then  either $u,v\in R_v^*$ or $u,v\notin R_v^*$.
\item[vii)] If $u\in R_v$ and $vw\in R_e$ for $v\in N_G(u)$, then either $u\in R_v^*,vw\in R_e^*$ or $u\notin R_v^*,vw\notin R_e^*$.
\end{itemize}
We also say that $(R_v^*,Y_v^*,R_e^*)$ is the \emph{base} of $(U,D,A)$. 

Our aim is to find a colorful solution if it exists. We do is by a dynamic programming algorithm based of the following properties of colorful solutions.

Let 
$$L=R_e\cup\{e\in E(G)\mid e\text{ is incident to a vertex of }R_v\}\cup\{uv\in E(G)\mid u,v\in Y_v\},$$
and $H=G[L]$.
Denote by $H_1,\ldots,H_s$ the components of $H$. Let $R_v^i=V(H_i)\cap R_e$, $Y_v^i=V(H_i)\cap Y_v$ and $R_e^i=E(H_i)\cap R_e$ for $i\in\{1,\ldots,s\}$.

\begin{myclaim}
\label{myclaim-A}
If $(U,D,A)$ is a colorful solution and $(R_v^*,Y_v^*,R_e^*)$ is its base, then
if  $H_i$ has a vertex of $R_v^*\cup Y_v*$ or en edge of $R_e^*$, then $R_v^i\subseteq R_v^*$, $Y_v^i\subseteq Y_v^*$ and $R_e^i\subseteq R_r^*$ for $i\in\{1,\ldots,s\}$.
\end{myclaim}

\begin{proofofclaim}[Proof of Claim \protect\ref{myclaim-A}]
Suppose that $H_i$ has $u\in R_v^*\cup Y_v^*$ or $e\in R_e^*$. 

If $v\in R_v^i\cup Y_v^i$, then $H_i$ has a path $P=x_0\ldots x_\ell$ such that 
$u=x_0$ or $e=x_0x_1$, and $x_\ell=v$.  By induction on $\ell$, we show that $v\in R_v^*$ or $v\in Y_v^*$ respectively. If $\ell=1$, then the statement follows from iii) and iv) of the definition of a colorful solution. Suppose that $\ell>1$. We consider three cases.

\medskip
\noindent
{\bf Case~1.}  $x_1\in R_v\cup Y_v$. By  iii) and iv), $x_1\in R_v^*\cup Y_v^*$ and, because the $(x_1,x_\ell)$-subpath of $P$ has length $\ell-1$, we conclude that  $v\in R_v^*$ or $v\in 
Y_v^*$ by induction. 

\medskip
Assume from now that $x_1\notin  R_v\cup Y_v$. 

\medskip
\noindent
{\bf Case~2.} $x_0x_1\in R_e$. Clearly, if for the first edge $e$ of $P$, $e\in R_e^*$, then $x_0x_1=e\in R_e^*$. Suppose that for the first vertex $u=x_0$ of $P$, $u\in R_v^*\cup Y_v^*$. Then 
by iv), $x_0x_1\in R_e^*$. If $x_1x_2\in R_e$, then $x_1x_2\in R_e^*$ by v). Since $x_1x_2\in R_e^*$ and  the $(x_1,x_\ell)$-subpath of $P$ has length $\ell-1$, we have that  $v\in R_v^*$ or $v\in Y_v^*$ by induction. Suppose that $x_1x_2\notin R_e$. Then because $x_1x_2\in L$, $x_2\in R_v$ and by vii), $x_2\in R_v^*$.  If $\ell=2$, then $x_\ell\in R_v^*$. Otherwise, as the 
 $(x_2,x_\ell)$-subpath of $P$ has length $\ell-2$, we have that  $v\in R_v^*$ or $v\in Y_v^*$ by induction.

\medskip
\noindent
{\bf Case~2.} $x_0x_1\notin R_e$. Then $u=x_0\in R_v^*\cup Y_v^*$. Because $x_0x_1\in L$, $x_0\in R_v^*$. 
If $x_1x_2\in R_e$, then $x_1x_2\in R_e^*$ by vii). Since $x_1x_2\in R_e^*$ and  the $(x_1,x_\ell)$-subpath of $P$ has length $\ell-1$, we have that  $v\in R_v^*$ or $v\in Y_v^*$ by induction. Suppose that $x_1x_2\notin R_e$. Then because $x_1x_2\in L$, $x_2\in R_v$ and by vi), $x_2\in R_v^*$.  If $\ell=2$, then $x_\ell\in R_v^*$. Otherwise, as the 
 $(x_2,x_\ell)$-subpath of $P$ has length $\ell-2$, we have that  $v\in R_v^*$ or $v\in Y_v^*$ by induction.
\medskip

Suppose that $e'\in R_e^i$. Then  $H_i$ has a path $P=x_0\ldots x_\ell$ such that 
$u=x_0$ or $e=x_0x_1$, and $x_{\ell-1}x_\ell=e'$.  Using the same inductive arguments as before, we obtain that $e'\in R_e^*$.
\end{proofofclaim}

\medskip

By Claim~\ref{myclaim-A}, we have that if there is a colorful solution $(U,D,A)$, then for its base $(R_v^*,Y_v^*,R_e^*)$, $R_v^*=\cup_{i\in I}R_v^i$,  $Y_v^*=\cup_{i\in I}Y_v^i$ and $R_e^*=\cup_{i\in I}R_e^i$
for some set of indices $I\subseteq\{1,\ldots,s\}$.  

The next property is a straightforward corollary of the definition $H$.

\begin{myclaim}
\label{myclaim-B}
For distinct $i,j\in\{1,\ldots,s\}$, if $u\in V(H_i)$ and $v\in V(H_j)$ are adjacent in $G$, then either $u,v\in B_v$ or 
($u\in Y_v^i$ and $v\in B_v$) or ($u\in B_v$ and $v\in Y_v^j$).
\end{myclaim}

We construct a dynamic programming algorithm that consecutively for $i=0,\ldots,s$, constructs the table $T_i$
that 
contains the records of values of the function $\gamma$:
$$\gamma(t_{vd},t_{ed},t_{ea},X,\delta)=(U,D,A,I),$$
where 
\begin{itemize}
\item[i)] $t_{vd}\leq k_{vd}$, $t_{ed}\leq k_{ed}$ and $t_{ea}\leq k_{ea}$,
\item[ii)] $X=\{d_1,\ldots,d_h\}$ is a collection (multiset) of integers, where $h\in\{1,\ldots,2t_{ea}\}$ and $d_i\in \{0,\ldots,\Delta\}$ for $i\in\{1,\ldots,h\}$,
\item[iii)] $\delta=(\delta_0,\ldots,\delta_{r})$, where $r=\max\{\Delta,\Delta(G)\}$ and
$\delta_i$ is a nonnegative integer 
for $i\in\{0,\ldots,r\}$,
\end{itemize}
such that $(U,D,A)$ is a \emph{partial solution} with the base $(R_v^*,Y_v^*,R_e^*)$ defined by $I\subseteq\{1,\ldots,i\}$ with the following properties.
\begin{itemize}
\item[iv)] 
$R_v^*=\cup_{i\in I}R_v^i$,  $Y_v^*=\cup_{i\in I}Y_v^i$ and $R_e^*=\cup_{i\in I}R_e^i$, and $t_{vd}=|R_v^*|$ and
$t_{ed}=|R_e^*|$.
\item[v)]  $U=R_v^*$, $D=R_e^*$,  $|A|=t_{ea}$ and for any $uv\in A$, $u,v\in Y_v^*$.
\item[vi)] The multiset $\{d_{G'}(y)\mid y\in Y_v^* \}=X$, where $G'=G-U-D+A$.
\item[vii)] $\delta(G')=\delta$.
\end{itemize}
In other words, $t_{vd}$ ,$t_{ed}$ and $t_{ea}$ are the numbers of deleted vertices, deleted edges and added edges respectively, $X$ is the multiset of degrees of of yellow vertices in the base of a partial solution, and $\delta$ is the degree vector of the graph obtained from $G$ by the editing with respect to a partial solution.
Notice that the values of $\gamma$ are defined only for some $t_{vd},t_{ed},t_{ea},X,\delta$ that satisfy i)--iii), as a partial solution with the properties iv)--vii) not necessarily exists, and we only keep records corresponding to the arguments $t_{vd},t_{ed},t_{ea},X,\delta$ for which $\gamma$ is defined.  

Now we explain how we construct the tables for $i\in\{0,\ldots,s\}$. 

\medskip
\noindent
{\bf Construction of $T_0$.} The table $T_0$ contains the unique record $(0,0,0,\emptyset,\delta)=(\emptyset,\emptyset,\emptyset,\emptyset)$, where 
$\delta=\delta(G)$ (notice that the length of $\delta$ can be bigger that the length of $\delta(G)$).

\medskip
\noindent
{\bf Construction of $T_i$ for $i\geq 1$.} We assume that $T_{i-1}$ is already constructed. Initially we set $T_i=T_{i-1}$. Then for each record  $\gamma(t_{vd},t_{ed},t_{ea},X,\delta)=(U,D,A,I)$ in $T_{i-1}$, we construct new records $\gamma(t_{vd}',t_{ed}',t_{ea}',X',\delta')=(U',D',A')$ and put them in $T_i$ unless $T_i$ already contains the value $\gamma(t_{vd}',t_{ed}',t_{ea}',X',\delta')$. In the last case we keep the old value.

Let $(t_{vd},t_{ed},t_{ea},X,\delta)=(U,D,A,I)$ in $T_{i-1}$. 
\begin{itemize}
\item If $t_{vd}+|R_v^i|>k_{vd}$ or $t_{ed}+|R_e^i|>k_{ed}$ or $t_{ea}+2|Y_v^i|>k_{ea}$, then stop considering the record. Otherwise,
let $t_{vd}'=t_{vd}+|R_v^i|$ and $t_{ed}'=t_{ed}+|R_e^i|$.
\item Let $F=G-U-D+A-R_v^i-R_e^i$. 
\item Let $\cup_{j\in I}Y_v^j=\{x_1,\ldots,x_h\}$, $d_F(x_f)=d_f$ for $f\in\{1,\ldots,h\}$. Let
$Y_v^i=\{y_1,\ldots,y_{\ell}\}$. Consider every $E_1\subseteq \binom{Y_v^i}{2}\setminus E(F[Y_v^i])$ and $E_2\subseteq\{x_fy_i\mid 1\leq f\leq h,1\leq j\leq\ell\}$ such that $|E_1|+|E_2|\leq k_{ea}-t_{ea}$, and set 
$\alpha_f=|\{x_fy_j\mid x_fy_j\in E_2,1\leq j\leq\ell\}|$ for $f\in\{1,\ldots,h\}$ and set 
$\beta_j=|\{e\mid e\in E_1,e\text{ is incident to }y_j\}|+|\{x_fy_j\mid x_fy_j\in E_2,1\leq f\leq h\}|$ for $j\in\{1,\ldots,\ell\}$.
\begin{itemize}
\item If $d_f+\alpha_f>\Delta$ for some $f\in\{1,\ldots,h\}$ or $d_{F}(y_j)+\beta_j>\Delta$ for some $j\in \{1,\ldots,\ell\}$, then stop considering the pair $(E_1,E_2)$.
\item Set $t_{ea}'=t_{ea}+|E_1|+|E_2|$, $X'=\{d_1+\alpha_1,\ldots,d_h+\alpha_h,d_F(y_1)+\beta_1,\ldots,d_F(y_\ell)+\beta_\ell\}$.
\item Let $F'=F+E_1+E_2$. Construct $\delta'=(\delta_0',\ldots,\delta_{r}')=\delta(F')$. 
\item Set $U'=U\cup R_v^i$, $D'=D\cup R_e^i$, $A'=A\cup E_1\cup E_2$, $I'=I\cup\{i\}$, set $\gamma(t_{vd}',t_{ed}',t_{ea}',X',\delta')=(U',D',A',I')$ and put the record in $T_i$.
\end{itemize}
\end{itemize}

We consecutively construct $T_1,\ldots,T_s$.  The algorithm returns a YES-answer if $T_s$ contains a record $(t_{vd},t_{ed},t_{ea},X,\delta)=(U,D,A,I)$ for $\delta=\delta(\sigma)$ and $(U,D,A)$ is a colorful solution in this case. Otherwise, the algorithm returns a NO-answer.

The correctness of the algorithm follows from the next claim.

\begin{myclaim}
\label{myclaim-C}
For each $i\in\{1,\ldots,s\}$, the table $T_i$ contains a record $\gamma(t_{vd},t_{ed},t_{ea},X,\delta)=(U,D,A,I)$, if and only if 
there are $t_{vd},t_{ed},t_{ea},X,\delta$ satisfying i)-iii) such that there is a partial solution $(U^*,D^*,A^*)$ and $I^*\subseteq\{1,\ldots,i\}$ that satisfy iv)-vii). In particular $t_{vd},t_{ed},t_{ea},X,\delta$,  $(U,D,A)$ and $I$ satisfy i)--vii) if $\gamma(t_{vd},t_{ed},t_{ea},X,\delta)=(U,D,A,I)$ is in $T_i$.
\end{myclaim}

\begin{proofofclaim}[Proof of Claim~\ref{myclaim-C}]
We prove the claim by induction on $i$. It is straightforward to see that it holds for $i=0$. Assume that $i>0$ and the claim is fulfilled for $T_{i-1}$.

Suppose that a record $\gamma(t_{vd}',t_{ed}',t_{ea}',X',\delta')=(U',D',A',I')$ was added in $T_i$. Then ether  $\gamma(t_{vd}',t_{ed}',t_{ea}',X',\delta')=(U',D',A',I')$ was in $T_{i-1}$ or it was constructed for some record $(t_{vd},t_{ed},t_{ea},X,\delta)=(U,D,A,I)$ from $T_{i-1}$. In the first case, 
$t_{vd}',t_{ed}',t_{ea}',X',Q'$, $(U',D',A')$ and $I'\subseteq\{1,\ldots,i\}$  satisfy i)-vii) by induction. Assume that $\gamma(t_{vd}',t_{ed}',t_{ea}',X',\delta')=(U',D',A',I')$ was constructed for some record $(t_{vd},t_{ed},t_{ea},X,Q)=(U,D,A,I)$ from $T_{i-1}$. Notice that $i\in I'$ in this case. Let $I=I'\setminus \{i\}$.
Consider $\cup_{j\in I}Y_v^j=\{x_1,\ldots,x_h\}$ and $Y_v^i=\{y_1,\ldots,y_{\ell}\}$. By Claim~\ref{myclaim-B}, $x_f$ and $y_j$ are not adjacent for $f\in\{1,\ldots,h\}$ and $j\in\{1,\ldots,\ell\}$. Then it immediately follows from the description of the algorithm that $t_{vd}',t_{ed}',t_{ea}',X',\delta'$, 
$(U',D',A')$ and $I'$ satisfy i)--vii). 

Suppose that there are $t_{vd},t_{ed},t_{ea},X,\delta$ satisfying i)-iii) such that there is a partial solution $(U^*,D^*,A^*)$ and $I^*\subseteq\{1,\ldots,i\}$ that satisfy iv)-vii).
Suppose that $i\notin I^*$. Then $T_{i-1}$ contains a record  $\gamma(t_{vd},t_{ed},t_{ea},X,\delta)=(U,D,A,I)$ by induction and, therefore, this record is in $T_i$. Assume from now that $i\in I^*$.
Let $I'=I^*\setminus \{i\}$. Consider $R_v'=\cup_{j\in I'}R_v^j$ and $Y_v'=\cup_{j\in I'}Y_v^j$.
Let $E_1=\{uv\in A\mid u,v\in T_v^i\}$ and $E_2=\{uv\in A\mid u\in Y_v',v\in Y_v^i\}$.
Define $U'=U\setminus R_v^i$, $D'=D\setminus R_e^i$ and $A'=A\setminus(E_1\cup E_2)$.
Let $t_{vd}'=|U'|$, $t_{ed}=|D'|$ and $t_{ea}=|A'|$.
Consider the multiset of integers $X'=\{d_F(v)\mid v\in Y_v'\}$ and the sequence $\delta'=(\delta_1',\ldots,\delta_r')=\delta(F)$ for $F=G-U'-D'+A'$.
We obtain that $t_{vd}',t_{ed}',t_{ea}',X',\delta'$, $(U',D',A')$ and $I'\subseteq\{1,\ldots,i-1\}$  satisfy i)-vii). 
By induction, $T_{i-1}$ contains a record $\gamma(t_{vd}',t_{ed}',t_{ea}',X',\delta')=(U'',D'',A'',I'')$.
Let $Y_v'=\{x_1,\ldots,x_h\}$, $\cup_{j\in I''}Y_v^j=\{x_1',\ldots,x_h'\}$ and assume that $d_F(x_f)=d_{F'}(x_f')$ for $f\in\{1,\ldots,h\}$, where $F'=G-U''-D''+A''$. 
Consider $E_2'$ obtained from $E_2$ by the replacement of every edge $x_fv$ by $x_f'v$ for $f\in\{1,\ldots,h\}$ and $v\in Y_v^i$. It remains to observe that when we consider $\gamma(t_{vd}',t_{ed}',t_{ea}',X',\delta')=(U'',D'',A'',I'')$ and the pair $(E_1,E_2')$, we obtain $\gamma(t_{vd},t_{ed},t_{ea},X,\delta)=(U,D,A,I)$ for
$U=U''\cup R_v^i$, $D=D''\cup R_e^i$, $A=A''\cup E_1\cup E_2'$ and $I=I''\cup\{i\}$.
\end{proofofclaim}

\medskip

Now we evaluate the running time of the dynamic programming algorithm. 

First, we upper bound the size of each table. Suppose that $\gamma(t_{vd},t_{ed},t_{ea},X,\delta)=(U,D,A,I)$ is included in a table $T_i$. 
By the definition and Claim~\ref{myclaim-C}, $\delta=\delta(G')$ for $G'=G-U-D+A$. Let $\delta=\{\delta_0,\ldots,\delta_r\}$ and $\delta(G)=(\delta_0',\ldots,\delta_r')$.   Let $i\in\{0,\ldots,r\}$. Denote $W_i=\{v\in V(G)\mid d_G(v)=i\}$. Recall that $\delta(G)\leq \Delta+k$. 
 If $\delta_i'>\delta_i$, then at least $\delta_i'-\delta_i$ vertices of $W_i$ should be either deleted or get modified degrees by the editing with respect to $(U,D,A)$. Since at most $k_{vd}$ vertices of $W_i$ can be deleted and we can modify degrees of at most $(k+\Delta)k_{vd}+2(k_{ed}+k_{ea})$ vertices, 
$\delta_i'-\delta_i\leq (k+\Delta+1)k_{vd}+2(k_{ed}+k_{ea})$.
Similarly, if $\delta_i>\delta_i'$, then at least $\delta_i-\delta_i'$ vertices of $V(G)\setminus W_i$ should get modified degrees. Since we can modify degrees of at most $(k+\Delta)k_{vd}+2(k_{ed}+k_{ea})$ vertices, $\delta_i-\delta_i'\leq (k+\Delta)k_{vd}+2(k_{ed}+k_{ea})$. We conclude
that for each $i\in\{0,\ldots,r\}$, 
$$\delta_i'-(k+\Delta+1)k_{vd}+2(k_{ed}+k_{ea})\leq  \delta_i\leq \delta_i'+(k+\Delta)k_{vd}+2(k_{ed}+k_{ea})$$
and, therefore, there are at most 
$(2(k+\Delta)k_{vd}+4(k_{ed}+k_{ea})+1)^r$ distinct vectors $\delta$.  Since $r=\max\{\Delta,\Delta(G)\}\leq \Delta+k$, we have 
$2^{O((\Delta+k)\log (\Delta+k))}$ distinct vectors $\delta$. The number of distinct multisets $X$ is at most $(\Delta+1)^{2k}$ and there are at most $3(k+1)$ possibilities for $t_{vd},t_{ed},t_{ea}$. We conclude that each $T_i$ has  $2^{O((\Delta+k)\log (\Delta+k))}$ records.

To construct a new record $\gamma(t_{vd}',t_{ed}',t_{ea}',X',\delta')=(U',D',A',I')$ from  $\gamma(t_{vd},t_{ed},t_{ea},X,\delta)=(U,D,A,I)$ we consider 
all possible choices of $E_1$ and $E_2$. Since these edges have their end-vertices in a set of size at most $2k_{ea}$ and $|E_1|+|E_2|\leq k_{ea}$, there are $2^{O(k\log k)}$ possibilities to choose $E_1$ and $E_2$. The other computations in the construction of  $\gamma(t_{vd}',t_{ed}',t_{ea}',X',\delta')=(U',D',A',I')$  can be done  in linear time. We have that $T_i$ can be constructed from $T_{i-1}$ in time $2^{O((\Delta+k)\log (\Delta+k))}\cdot n$ for $i\in\{1,\ldots,s\}$. Since $s\leq n$, the total time is $2^{O((\Delta+k)\log (\Delta+k))}\cdot n^2$. 

\medskip
We proved that a colorful solution can be found in time $2^{O((\Delta+k)\log (\Delta+k))}\cdot n^2$ if exist. Clearly, any colorful solution is a solution for $(G,\sigma,k_{vd},k_{ed},k_{ea})$ and we can return it, but nonexistence of a colorful solution does not imply that there is no solution. Hence, to find a solution, we run the randomized algorithm  $N$ times, i.e., we consider $N$ random colorings and try to find a colorful solution for them. If we find a solution after some run, we return it and stop. If we do not obtain a solution after $N$ runs, we return a NO-answer. The next claim shows that it is sufficient to run the algorithm $N=6^{2k(\Delta+k)^2}$ times.

\begin{myclaim}
\label{myclaim-D}
There is a positive $p$ that does not depend on the instance  such that 
if after $N=6^{2k(\Delta+k)^2}$ executions the randomized algorithm does not find a solution for $(G,\sigma,k_{vd},k_{ed},k_{ea})$, then 
the probability that  $(G,\sigma,k_{vd},k_{ed},k_{ea})$ is a no-instance is at least $p$.
\end{myclaim}

\begin{proofofclaim}[Proof of Claim~\ref{myclaim-D}]
Suppose that $(G,\sigma,k_{vd},k_{ed},k_{ea})$ has a solution $(U,D,A)$.  
Let $X$ be the set of end-vertices of the edges of $D$ and $Y$ is the set of end-vertices of $A$. 
Let $W=N_G^2[U]\cup N_G[X\cup Y]$ and 
denote by $L$ the set of edges incident to the vertices of $N_G[U]\cup X\cup Y$.
The algorithm colors the vertices of $G$ independently and uniformly at random by three colors and the edges are colored by two colors. 
Notice that if the vertices of $W$ and the edges of $L$ are colored correctly with respect to the solution, i.e., the vertices of $U$ are red, the vertices of $Y$ are yellow, all the other vertices are blue, the edges of $D$ are red and all the other edges are blue, then $(U,D,A)$ is a colorful solution. Hence, the algorithm can find a solution in this case.

We find a lower bound for the probability that the vertices of $W$ and the edges of $L$ are colored correctly with respect to the solution.
Recall that $\Delta(G)\leq \Delta+k$. 
Hence, $|W|\leq k_{vd}(\Delta+k)^2+2(k_{ed}+k_{ea})(\Delta+k)\leq 2k(\Delta+k)^2$ and $|L|\leq  k_{vd}(\Delta+k)^2+2(k_{ed}+k_{ea})(\Delta+k)\leq 2k(\Delta+k)^2$. As the vertices are colored by three colors and the edges by two, we obtain that the probability that the vertices of $W$ and the edges of $L$ are colored correctly with respect to the solution is at least 
$3^{-2k(\Delta+k)^2}\cdot 2^{-2k(\Delta+k)^2}=6^{-2k(\Delta+k)^2}$. 

The probability that the vertices of $W$ and the edges of $L$ are not colored correctly with respect to the solution is at most $1-6^{-2k(\Delta+k)^2}$, and the probability that these vertices are non colored correctly with respect to the solution for neither of $N=6^{2k(\Delta+k)^2}$ random colorings is at most   $(1-1/N)^{N}$, and the claim follows.
\end{proofofclaim}

\medskip
Claim~\ref{myclaim-D} implies that the running time of the randomized algorithm is $2^{O(k(\Delta+k)^2}\cdot n^2$.

The algorithm can be derandomized by standard techniques (see~ \cite{AlonYZ95,CaiCC06}) because random colorings can be replaced by the colorings induced by \emph{universal sets}.
Let $m$ and $r$ be positive integers, $r\leq m$. An  \emph{$(m,r)$-universal set} is a collection of binary vectors of length $m$ such that for each index subset of size $r$, each of the $2^r$ possible combinations of values appears in some vector of the set. It is known that an $(m,r)$-universal set can be constructed in \classFPT-time with the parameter $r$. The best construction is due to Naor, Schulman and Srinivasan~\cite{naor1995splitters}. They obtained an $(m,r)$-universal set of size $2^r\cdot r^{O(\log r)} \log m$, and proved that the elements of the sets  can be listed in time that is linear in the size of the set. 

In our case we have $m=|V(G)|+|E(G)|\leq ((\Delta+k)/2+1)n$ and $r=4k(\Delta+k)^2$, as we have to obtain the correct coloring of $W$ and $L$ corresponding to a solution $(U,D,A)$. 
Observe that colorings induced by a universal set are binary and we use three colors. To fix it, we assume that the coloring of the vertices and edges is done in two stages. First, we color the elements of $G$ by two colors: red and green, and then recolor the green elements by yellow or blue. By using  an $(m,r)$-universal set of size $2^r\cdot r^{O(\log r)} \log m$, we get 
 $4^r\cdot r^{O(\log r)} \log m$ colorings by three colors. We conclude that the running time of the derandomized algorithm is $2^{O(k(\Delta+k)^2}\cdot n^2\log n$.
\qed
\end{proof}

\section{Kernelization for  Editing to a Graph with a Given Degree Sequence}\label{sec:kernel}
In this section we show that \textsc{Editing to a Graph with a Given Degree Sequence} has a polynomial kernel when parameterized by $k+\Delta$ if $S=\{ea\}$, but for all other nonempty $S\subseteq \{vd,ed,ea\}$, there is no polynomial kernel unless  $\classNP\subseteq\classCoNP/\text{\rm poly}$.

\begin{theorem}\label{thm:kernel}
If $S=\{\ea\}$, then \textsc{Editing to a Graph with a Given Degree Sequence} parameterized by $k+\Delta$ has a kernel with $O(k\Delta^2)$ vertices, where $\Delta=\max \sigma$.  
\end{theorem}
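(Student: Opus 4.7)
The plan is to exhibit a short list of reduction rules that, applied exhaustively in polynomial time, produces an equivalent instance whose graph has $O(k\Delta^2)$ vertices.

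\textbf{Step 1 (basic feasibility).} Since only edge additions are permitted, no vertex's degree can decrease, so we return NO if $\Delta(G)>\Delta$ or $|V(G)|\neq|\sigma|$. Let $W_d=\{v\in V(G):d_G(v)=d\}$ and let $\delta_d^*$ denote the number of entries of $\sigma$ equal to $d$. Let $k^{\star}=\tfrac12(\sum_i\sigma_i-\sum_v d_G(v))$; if $k^{\star}$ is not a nonnegative integer or exceeds $k$, return NO ($k^{\star}$ is exactly the number of edges that must be added). Because at most $2k$ vertices can change degree under $k$ additions, we also return NO if $||W_d|-\delta_d^*|>2k$ for some $d$.

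\textbf{Step 2 (isolated vertices).} While $\delta_0^*>0$ and $W_0\neq\emptyset$, delete one vertex of $W_0$ and remove one $0$ from $\sigma$; an isolated vertex of $G$ needs no editing to serve as an isolated vertex of $G'$. After exhaustive application, either $\delta_0^*=0$ (so $|W_0|\leq 2k$ by Step 1) or $W_0=\emptyset$ (which forces $\delta_0^*=0$, since edge additions create no new isolated vertices). Either way $|W_0|\leq 2k$.

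\textbf{Step 3 (main reduction for $d\geq 1$).} Fix a threshold $\tau=\Theta(k\Delta)$, to be chosen below. For each $d\geq 1$ with $|W_d|>\tau$, pick an arbitrary $v\in W_d$, delete $v$ and its incident edges from $G$, remove one $d$ from $\sigma$, and for each neighbor $u$ of $v$ in $G$ replace one copy of $d_G(u)$ in $\sigma$ by $d_G(u)-1$. Correctness reduces to the claim that the current instance has a solution in which $v$ and every neighbor of $v$ are simultaneously inactive (incident to no added edge). Granted the claim, the forward direction is immediate: removing $v$ decreases each neighbor's degree by one in both $G$ and $G'$, and because the neighbor is inactive its target drops from $d_G(u)$ to $d_G(u)-1$, matching the update of $\sigma$; the set of added edges is unchanged and certifies the reduced instance. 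The reverse direction is symmetric: re-attaching $v$ with its original neighbors raises each neighbor's degree by one and restores $\sigma$.

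The joint-inactivity claim is proved by a swap argument. In any solution, at most $2k$ vertices are active. For any active $x\in W_e$ with new-neighbor set $N'(x)$ (of size at most $k$), a sibling $x''\in W_e$ fails to replace $x$ only if $x''$ is already active, equals some $w\in N'(x)$, or is adjacent in $G$ to some $w\in N'(x)$: a count of at most $2k+k+k\Delta\leq 2(k+1)\Delta$. Hence once $|W_e|>2(k+1)\Delta$, we can re-route every new edge $xw$ to $x''w$ for an eligible $x''$, producing a solution in which $x$ is inactive. Apply this swap first to $v$ and then successively to each of the at most $\Delta$ neighbors of $v$; each iteration enlarges the forbidden sets by only $O(\Delta)$, so choosing $\tau=\Theta(k\Delta)$ keeps the pool of eligible replacements non-empty throughout. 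Finally, after the reductions terminate, $|W_d|\leq\tau=O(k\Delta)$ for each $d\in\{0,\ldots,\Delta\}$, so $|V(G)|=\sum_d|W_d|\leq(\Delta+1)\cdot O(k\Delta)=O(k\Delta^2)$, as required.

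\textbf{Main obstacle.} The delicate part is the Step 3 swap analysis: arranging simultaneously compatible replacements for $v$ and its up to $\Delta$ neighbors, while controlling the growth of the forbidden sets as swaps accumulate; all other steps are bookkeeping. Once the joint-inactivity claim is in place, the soundness of the reduction, the $O(k\Delta^2)$ size bound, and polynomial running time follow by routine accounting.
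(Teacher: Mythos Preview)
There is a genuine gap in Step~3. Your swap argument shows that an active vertex $x\in W_e$ can be replaced by an inactive sibling $x''\in W_e$ provided $|W_e|$ exceeds the size of the forbidden set. For $x=v$ this is fine, since by hypothesis $|W_d|>\tau$. But when you ``successively'' swap each neighbour $u$ of $v$, the relevant pool is $W_{d_G(u)}$, not $W_d$, and nothing in your hypotheses bounds $|W_{d_G(u)}|$ from below. Concretely, take $\Delta=3$, $k=1$, and let $G$ contain exactly two vertices $u,w$ of degree~$1$ (with $u\sim v$ for some $v$ of degree~$3$ and $w$ adjacent to some other degree-$3$ vertex), together with many degree-$3$ vertices; let $\sigma$ be obtained from $\sigma(G)$ by replacing the two $1$'s by $2$'s. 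The unique solution is $A=\{uw\}$, so $u$ is active in \emph{every} solution. Since $|W_3|$ is large, your rule may pick this particular $v\in W_3$; its neighbour $u$ lies in $W_1$ with $|W_1|=2$, and both members of $W_1$ are active, so no sibling swap is available and the joint-inactivity claim fails. Consequently the reduction (delete $v$, decrement the neighbours' target degrees) is not sound for this choice of~$v$; and ``arbitrary $v$'' is exactly what you claimed. The reverse direction has the same defect: a solution $A'$ for $(G{-}v,\sigma')$ need not leave the former neighbours of $v$ inactive, so re-attaching $v$ need not restore $\sigma$.

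The paper sidesteps this neighbour problem entirely. It keeps, for each degree $i$, a set $W_i'\subseteq W_i$ of size $\min\{|W_i|,2k(\Delta+1)\}$ and shows by a \emph{single} swap (only on endpoints of added edges, never on their neighbours) that some solution has all endpoints in $W=\bigcup_i W_i'$. It then deletes $V(G)\setminus W$ and repairs the resulting degree deficits of surviving vertices by attaching them to a fresh $(\Delta+2)$-clique, adjusting $\sigma$ to record the clique vertices' (necessarily $>\Delta$) degrees. The gadget absorbs exactly the effect you tried to handle by forcing neighbours to be inactive, and that is what makes the $O(k\Delta^2)$ bound go through.
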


\begin{proof}
Let $(G,\sigma,k)$ be an instance of \textsc{Editing to a Graph with a Given Degree Sequence} and $\Delta=\max \sigma$. If $\Delta(G)>\Delta$,  $(G,\sigma,k)$ is a no-instance, because by edge additions it is possible only increase degrees. Hence, we immediately stop and return a NO-answer in this case. Assume from now that $\Delta(G)\leq \Delta$.
For $i\in\{0,\ldots,\Delta\}$, denote $W_i=\{v\in V(G)\mid d_G(v)=i\}$ and $\delta_i=|W_i|$. Let $s_i=\min\{\delta_i,2k(\Delta+1)\}$ and let $W_i'\subseteq W_i$ be an arbitrary set of size $s_i$ for $i\in\{0,\ldots,\Delta\}$.  We consider $W=\cup_{i=0}^\Delta W_i'$ and prove the following claim.

\begin{myclaim}
\label{myclaim-E}
If $(G,\sigma,k)$ is a yes-instance of \textsc{Editing to a Graph with a Given Degree Sequence}, then there is 
$A\subseteq \binom{V(G)}{2}\setminus E(G)$ such that $\sigma(G+A)=\sigma$, $|A|\leq k$ and for any $uv\in A$, $u,v\in  W$.
\end{myclaim}

\begin{proofofclaim}[Proof of Claim~\ref{myclaim-E}]
Suppose that $A\subseteq \binom{V(G)}{2}\setminus E(G)$ is a solution for   $(G,\sigma,k)$, i.e., $\sigma(G+A)=\sigma$ and $|A|\leq k$, such that the total number of end-vertices of the edges of $A$ in $V(G)\setminus W$ is minimum. Suppose that there is $i\in\{0,\ldots,\Delta\}$ such that at least one edge of $A$ has its end-vertex in $W_i\setminus W_i'$. 
Clearly, $s_i= 2k(\Delta+1)$.
Denote by $\{x_1,\ldots,x_p\}$ the set of end-vertices of the edges of $A$ in $W_i$ and let $\{y_1,\ldots,y_q\}$ be the set of end-vertices of the edges of $A$ in $V(G)\setminus W_i$. 
Since $p+q\leq 2k$, $\Delta(G)\leq \Delta$ and $s_i=2k(\Delta+1)$, there is a set of vertices $\{x_1',\ldots,x_p'\}\subseteq W_i'$ such that the vertices of this set are pairwise nonadjacent and are not adjacent to the vertices of $\{y_1,\ldots,y_q\}$. We construct $A'\subseteq \binom{V(G)}{2}\setminus E(G)$ by replacing every edge $x_iy_j$ by $x_i'y_j$ for $i\in\{1,\ldots,p\}$ and $j\in\{1,\ldots,q\}$, and every edge $x_ix_j$ is replaced by $x_i'x_j'$ for $i,j\in\{1,\ldots,q\}$. It is straightforward to verify that $A'$ is a solution for  
 $(G,\sigma,k)$, but $A'$ has less end-vertices outside $W$ contradicting the choice of $A$. Hence, no edge of $A$ has an end-vertex in $V(G)\setminus W$.
\end{proofofclaim}

\medskip

If $\delta_i\leq 2k(\Delta+1)$ for $i\in\{0,\ldots,\Delta\}$, then we return the original instance $(G,\sigma,k)$ and stop, as $|V(G)|\leq 2k(\Delta+1)^2$. From now we assume that there is $i\in\{0,\ldots,\Delta\}$ such that $\delta_i>2k(\Delta+1)$.
We construct the graph $G'$ as follows.
\begin{itemize}
\item Delete all the vertices of $V(G)\setminus W$.
\item Construct $h=\Delta+2$ new vertices $v_1,\ldots,v_h$ and join them by edges pairwise to obtain a clique.  
\item For any $u\in W$ such that $r=|N_G(u)\cap (V(G)\setminus W)|\geq 1$, construct edges $uv_1,\ldots,uv_r$. 
\end{itemize}
Notice that $d_{G'}(v_1)\geq\ldots\geq d_{G'}(v_h)\geq\Delta+1$ and $d_{G'}(u)=d_G(u)$ for $u\in W$.
Now we consider the sequence $\sigma$ and construct the sequence $\sigma'$ as follows.
\begin{itemize}
\item The first $h$ elements of $\sigma'$ are $d_{G'}(v_1),\ldots,d_{G'}(v_h)$.
\item Consider the elements of $\sigma$ in their order and for each integer $i\in\{0,\ldots,\Delta\}$ that occurs $j_i$ times in $\sigma$, add $j_i-(\delta_i-s_i)$ copies of $i$ in $\sigma'$.
\end{itemize}
We claim that $(G,\sigma,k)$ is a yes-instance of \textsc{Editing to a Graph with a Given Degree Sequence} if and only if $(G',\sigma',k)$ is a yes-instance of the problem.

Suppose that $(G,\sigma,k)$ is a yes-instance of \textsc{Editing to a Graph with a Given Degree Sequence}. By Claim~\ref{myclaim-E}, it has a solution 
$A\subseteq \binom{V(G)}{2}\setminus E(G)$ such that  for any $uv\in A$, $u,v\in  W$. It is straightforward to verify that $\sigma(G'+A)=\sigma'$, i.e., $A$ is a solution for $(G',\sigma',k)$. Assume that $A\subseteq \binom{V(G')}{2}\setminus E(G)$ is a solution for $(G',\sigma',k)$. Because  $d_{G'}(v_1),\ldots,d_{G'}(v_h)$ are the first $h$ elements of $\sigma'$ and $d_{G'}(u)=d_G(u)\leq \Delta$ for $u\in W$, for any $uv\in A$, $u,v\in W$. Then  it is straightforward to check that $\sigma(G+A)=\sigma$, i.e.,  $A$ is a solution for $(G,\sigma,k)$.
\qed
\end{proof}

We complement Theorem~\ref{thm:kernel} by showing that it is unlikely that \textsc{Editing to a Graph with a Given Degree Sequence} parameterized by $k+\Delta$ has a polynomial kernel
for $S\neq\{\ea\}$. The proof is based on the cross-composition technique introduced by Bodlaender, Jansen and Kratsch~\cite{BodlaenderJK14}.

\begin{theorem}\label{thm:no-kernel}
If $S\subseteq\{\vd,\ed,\ea\}$ but $S\neq\{\ed\}$, then \textsc{Editing to a Graph with a Given Degree Sequence}  has no polynomial kernel unless  $\classNP\subseteq\classCoNP/\text{\rm poly}$ when the problem is parameterized by $k+\Delta$ for $\Delta=\max\sigma$.  
\end{theorem}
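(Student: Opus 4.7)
The plan is to apply the cross-composition framework of Bodlaender, Jansen and Kratsch~\cite{BodlaenderJK14}: exhibiting a polynomial cross-composition from some \classNP-hard problem into \textsc{Editing to a Graph with a Given Degree Sequence} parameterized by $k+\Delta$ rules out a polynomial kernel unless $\classNP\subseteq\classCoNP/\text{\rm poly}$. I take as source the \textsc{Clique} problem on $d$-regular graphs, restricted (by standard padding) to instances with $d\geq\max(k+2,\; k(k-1)/2+1)$, where it remains \classNP-hard; the polynomial equivalence relation groups instances by the triple $(n,d,k)$, with all malformed inputs in one trivial class. Given $t$ equivalent instances $(G_1,k),\ldots,(G_t,k)$, I produce an instance $(G^{*},\sigma,k^{*})$ whose parameter $\Delta+k^{*}$ equals $d+k(k-1)/2$, polynomial in $|x_i|$ and independent of $t$, by splitting on $S$.

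\emph{Case 1: $\ed\in S$.} Set $G^{*}:=G_1\sqcup\cdots\sqcup G_t$, let $\sigma$ consist of $tn-k$ copies of $d$ followed by $k$ copies of $d-(k-1)$, and set $k^{*}:=k(k-1)/2$. The forward direction repeats the first reduction of Theorem~\ref{thm:W-hard}. In the reverse direction, $|V(G^{*})|=|\sigma|$ forces $U=\emptyset$; the required total-degree drop $k(k-1)=2|D|-2|A|$ combined with $|D|+|A|\leq k^{*}$ yields $|A|=0$ and $|D|=k(k-1)/2$. The $k$ vertices whose degree drops by $k-1$ absorb all $k(k-1)$ endpoint-incidences of $D$, forcing $D$ to span these $k$ vertices as a $(k-1)$-regular graph, i.e., as $K_k$; connectivity places $K_k$ inside one $G_i$.

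\emph{Case 2: $\ed\notin S$, so $\vd\in S$ (as $S$ is nonempty and differs from $\{\ea\}$).} Let $\tilde G_i$ denote the edge-subdivision of $G_i$, set $G^{*}:=\tilde G_1\sqcup\cdots\sqcup\tilde G_t$, let $\sigma$ consist of $tn-k$ copies of $d$, $k$ copies of $d-(k-1)$, and $tm-k(k-1)/2$ copies of $2$ (with $m:=nd/2$), and set $k^{*}:=k(k-1)/2$. The forward direction imitates the third reduction of Theorem~\ref{thm:W-hard}. For the reverse direction, the vertex counts force $|U|=k(k-1)/2$, and the budget then forces $|A|=0$. If some original vertex $v$ lay in $U$, each of its $d$ subdivision neighbours would also have to lie in $U$ (otherwise it would end up with degree $1$, which is absent from $\sigma$ since $d-(k-1)\geq 3$), giving $|U|\geq d+1>k(k-1)/2$, a contradiction. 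Hence $U$ consists of $k(k-1)/2$ subdivision vertices, and the endpoint-incidence argument from Case 1, applied to the originals whose degrees drop, identifies these subdivisions with the edges of a $K_k$ inside a single $G_i$.

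The crux is Case 2: one must preclude the combined use of vertex deletions on originals and of edge additions, and it is exactly the equivalence-class restriction $d\geq k(k-1)/2+1$ that collapses that possibility (edge additions are killed automatically by the budget). The only non-routine ingredient outside the cross-composition itself is establishing \classNP-hardness of \textsc{Clique} on regular graphs under this lower bound on $d$, which follows by standard padding.
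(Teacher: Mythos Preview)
Your proposal is correct and follows essentially the same route as the paper: an OR-cross-composition from \textsc{Clique} on regular graphs, with the identical case split ($\ed\in S$ versus $\vd\in S$, $\ed\notin S$) and the identical constructions (disjoint union, respectively disjoint union of subdivisions). The only technical difference is how you rule out original vertices from $U$ in Case~2: the paper observes that $\sigma$ contains exactly $tn$ entries of value at least $3$ (using $d-(k-1)\geq 3$), and since every surviving subdivision vertex has degree at most $2$, all $tn$ originals must survive---this avoids your extra padding hypothesis $d\geq k(k-1)/2+1$, though your cascade argument (an original in $U$ drags in all $d$ of its subdivision neighbours) is equally valid once that bound is in place.
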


\begin{proof}
For the proof of the theorem, we  need some additional definitions and statements.
Recall that, formally, a parameterized problem $\mathcal{P}\subseteq\Sigma^*\times\mathbb{N}$, where $\Sigma$ is a finite alphabet.

Let $\Sigma$ be a finite alphabet. An equivalence relation $\mathcal{R}$ on the set of strings $\Sigma^*$ is called a \emph{polynomial equivalence relation} if the following two conditions hold:
\begin{itemize}
\item[i)] there is an algorithm that given two strings $x,y\in\Sigma^*$ decides whether $x$ and $y$ belong to
the same equivalence class in time polynomial in $|x|+|y|$,
\item[ii)] for any finite set $S\subseteq\Sigma^*$, the equivalence relation $\mathcal{R}$ partitions the elements of $S$ into a
number of classes that is polynomially bounded in the size of the largest element of $S$.
\end{itemize}

Let $L\subseteq\Sigma^*$ be a language, let $\mathcal{R}$ be a polynomial
equivalence relation on $\Sigma^*$, and let $\mathcal{P}\subseteq\Sigma^*\times\mathbb{N}$   
be a parameterized problem.  An \emph{OR-cross-composition of $L$ into $\mathcal{P}$} (with respect to $\mathcal{R}$) is an algorithm that, given $t$ instances $x_1,x_2,\ldots,x_t\in\Sigma^*$ 
of $L$ belonging to the same equivalence class of $\mathcal{R}$, takes time polynomial in
$\sum_{i=1}^t|x_i|$ and outputs an instance $(y,k)\in \Sigma^*\times \mathbb{N}$ such that:
\begin{itemize}
\item[i)] the parameter value $k$ is polynomially bounded in $\max\{|x_1|,\ldots,|x_t|\} + \log t$,
\item[ii)] the instance $(y,k)$ is a yes-instance for $\mathcal{P}$ if and only if at least one instance $x_i$ is a yes-instance for $L$ for $i\in\{1,\ldots,t\}$.
\end{itemize}
It is said that $L$ \emph{OR-cross-composes into} $\mathcal{P}$ if a cross-composition
algorithm exists for a suitable relation $\mathcal{R}$.

In particular, Bodlaender, Jansen and Kratsch~\cite{BodlaenderJK14} proved that 
if an \classNP-hard language $L$ OR-cross-composes into the parameterized problem $\mathcal{P}$,
then $\mathcal{P}$ does not admit a polynomial kernelization unless
$\classNP\subseteq\classCoNP/\text{\rm poly}$.

We prove that the \textsc{Clique} problem that asks for a graph $G$ and a positive integer $k$, whether $G$ has a clique of size $k$, OR-cross-composes into \textsc{Editing to a Graph with a Given Degree Sequence}. Recall that this problem is  \classNP-complete~\cite{GareyJ79} for regular graphs.

Suppose that $\ed\in S$.
We assume that two instances $(G,k)$ and $(G',k')$ of \textsc{Clique} are equivalent if $|V(G)|=|V(G')|$, $k=k'$ and $G,G'$ are $d$-regular for some nonnegative integer $d$.
 Let $(G_1,k),\ldots,(G_t,k)$ be equivalent instances of \textsc{Clique}, where $G_1,\ldots,G_t$ are $d$-regular, $n=|V(G_1)|=\ldots=|V(G_t)|$ and $d\geq k-1$. 
We construct the graph $G$ by taking the disjoint union of  copies of $G_1,\ldots,G_t$. 
Consider the sequence $\sigma=(\sigma_1,\ldots,\sigma_{nt})$, where
$$
\sigma_i=
\begin{cases}
d&\mbox{if }1\leq i\leq nt-k,\\
d-(k-1)&\mbox{if } nt-k+1\leq i\leq nt.
\end{cases}
$$
Let $k'=k(k-1)/2$. We claim that $(G_i,k)$ is a yes-instance of \textsc{Clique} for some $i\in\{1,\ldots,t\}$
if and only if $(G,\sigma,k')$ is a yes-instance of \textsc{Editing to a Graph with a Given Degree Sequence}.
If $K$ is a clique of size $k$ in $G_i$, then the graph $G'$ obtained  from $G$ by the deletion of the $k'=k(k-1)/2$ edges of $D=E(G[K])$ has the degree sequence $\sigma$. Assume that 
$(U,D,A)$ is a solution of $(G,\sigma,k)$. Clearly, $U=\emptyset$ even if $\vd\in R$, because $\sigma$ contains $nt$ elements. Since $\sum_{i=1}^{nt}\sigma_i=dn-k(k-1)$, we have that $A=\emptyset$. It remains to notice that because in $G-D$ $k$ vertices have degree $d-(k-1)$, $G[D]$ is a compete graph with $k$ vertices, i.e., $G$ contains a clique of size $k$. Clearly, any clique $K$ of size $k$ is a clique of some $G_i$ for $i\in\{1,\ldots,t\}$.

Assume that $S=\{\vd\}$. Now we assume that two instances $(G,k)$ and $(G',k')$ of \textsc{Clique} are equivalent if $|V(G)|=|V(G')|$, $|E(G)|=|E(G')|$, $k=k'$ and $G,G'$ are $d$-regular for some nonnegative integer $d$.
Let $(G_1,k),\ldots,(G_t,k)$ be equivalent instances of \textsc{Clique}, where $G_1,\ldots,G_t$ are $d$-regular, $n=|V(G_1)|=\ldots=|V(G_t)|$, $m=|E(G_1)|=\ldots=|E(G_t)|$ and $d-(k-1)\geq 3$. 
We construct the graph $G$ as follows.
\begin{itemize}
\item Take the disjoint union of copies of $G_1,\ldots,G_t$.
\item For each edge $uv\in E(G_i)$ for $i\in\{1,\ldots,t\}$, subdivide it, i.e., construct a new vertex $w$ and replace $uv$ by $uw$ and $wv$. We call the new vertices \emph{subdivision} vertices.
\end{itemize}
Let $k'=k(k-1)/2$.
Consider the sequence $\sigma=(\sigma_1,\ldots,\sigma_p)$, where $p=(n+m)t-k'$ and 
$$
\sigma_i=
\begin{cases}
d&\mbox{if } 1\leq i\leq nt-k,\\
d-(k-1)&\mbox{if } nt-k+1\leq i\leq nt,\\
2&\mbox{if } nt+1\leq i\leq p.
\end{cases}
$$
 We claim that $(G_i,k)$ is a yes-instance of \textsc{Clique} for some $i\in\{1,\ldots,t\}$
if and only if $(G,\sigma,k')$ is a yes-instance of \textsc{Editing to a Graph with a Given Degree Sequence}.
If $K$ is a clique of size $k$ in $G_i$, then the graph $G'$ obtained  from $G$ by the deletion of the $k'=k(k-1)/2$ subdivision vertices corresponding to the edges $G[K]$ has the degree sequence $\sigma$. Assume that 
$(U,D,A)$ is a solution of $(G,\sigma,k)$. Because $\sigma$ has $p$ elements and $|V(G)|-p=t(n+m)-p=k'$, $U$ contains $k'$ vertices and $D=A=\emptyset$. By the construction of $G$ and $\sigma$, $U$ contains only vertices of degree 2. As $d-(k-1)\geq 3$, we have that $U$ contains $k'$ subdivision vertices.
It remains to notice that because in $G-U$ $k$ vertices have degree $d-(k-1)$, the subdivision vertices of $U$ correspond to the edges of a compete graph with $k$ vertices, i.e., $G$ contains a clique of size $k$. Clearly, any clique $K$ of size $k$ is a clique of some $G_i$ for $i\in\{1,\ldots,t\}$.
\qed
\end{proof}


\end{document}